\newtheorem{theorem}{Theorem}
\newtheorem{lemma}[theorem]{Lemma}
\newtheorem{xmpl}{Example}
\newtheorem{remark}{Remark}
\title{\LARGE \bf Rollout-Based Approximate Dynamic Programming for MDPs with Information-Theoretic Constraints}
\author{Zixuan He$^{1}$, Charalambos D. Charalambous$^2$, and Photios A. Stavrou$^{1}$% <-this % stops a space
	%\thanks{*This work was supported by XXX}% <-this % stops a space
\thanks{*The work of Z. He was supported by the Huawei France-EURECOM Chair on Future Wireless Networks. The work of P. A. Stavrou was supported in part by the SNS JU project 6G-GOALS \cite{strinati:2024} under the EU’s Horizon programme (Grant Agreement No. 101139232) and by the Huawei France-EURECOM Chair on Future Wireless Networks.}
\thanks{$^{1}$Z. He and P. A. Stavrou are with the Foundation and Algorithm Group, Communication Systems Department, EURECOM, France. {\tt\small email: \{zixuan.he,fotios.stavrou\}@eurecom.fr}}
\thanks{$^{2}$C. D. Charalambous is with the Department of Electrical and Computer Engineering, University of Cyprus, Cyprus.
{\tt\small email: chadcha@ucy.ac.cy}}%
}
\begin{document}

%\setlength{\abovedisplayskip}{2pt}
%\setlength{\belowdisplayskip}{2pt}
%\BeforeBeginEnvironment{figure}{\vskip-2ex}
%\AfterEndEnvironment{figure}{\vskip-1ex}
%\BeforeBeginEnvironment{table}{\vskip-2ex}
%\AfterEndEnvironment{table}{\vskip-2ex}
\maketitle
\thispagestyle{empty}
\pagestyle{empty}

%%%%%%%%%%%%%%%%%%%%%%%%%%%%%%%%%%%%%%%%%%%%%%%%%%%%%%%%%%%%%%%%%%%%%%%%%%%%%%%%
\begin{abstract}
This paper studies a finite-horizon Markov decision problem with information-theoretic constraints, where the goal is to minimize directed information from the controlled source process to the control process, subject to stage-wise cost constraints, aiming for an optimal control policy. We propose a new way of approximating a solution for this problem, which is known to be formulated as an unconstrained MDP with a continuous information-state using Q-factors. To avoid the computational complexity of discretizing the continuous information-state space, we propose a truncated rollout-based backward-forward approximate dynamic programming (ADP) framework. Our approach consists of two phases: an offline base policy approximation over a shorter time horizon, followed by an online rollout lookahead minimization, both supported by provable convergence guarantees. We supplement our theoretical results with a numerical example where we demonstrate the cost improvement of the rollout method compared to a previously proposed policy approximation method, and the computational complexity observed in executing the offline and online phases for the two methods. 
% Our proposed rollout-based scheme offers strong scalability and interactability.
	
\end{abstract}

%%%%%%%%%%%%%%%%%%%%%%%%%%%%%%%%%%%%%%%%%%%%%%%%%%%%%%%%%%%%%%%%%%%%%%%%%%%%%%%%
\section{Introduction}
\label{section:introduction}

Sequential decision-making under uncertainty is a fundamental problem that attracts extensive attention across control and communication systems, paving the way for developing approximate dynamic programming (DP) or model-based reinforcement learning methods \cite{bertsekas:2019}. Markov decision processes (MDPs) provide a natural framework for modeling and analyzing sequential decision-making problems through the lens of approximate DP (ADP) problems \cite{puterman:2005}. 

Directed information (DI) is an information-theoretic measure that quantifies the causal information flow from one process to another in the presence of feedback information \cite{massey:1990}. It has been widely used by control and communication communities to quantify fundamental limits in feedback control systems under information-theoretic constraints \cite{charalambous:2017tac,tanaka:2018,kostina:2019a,stavrou:2021tac2}. By incorporating DI into the control objective, the controller is encouraged to be ``information-frugal", making decisions based on only the most relevant and necessary state information, thus enabling real-time systems to balance the performance tradeoffs between control and communication.

DI and other causal information-theoretic measures (eg. entropy, transfer-entropy \cite{schreiber:2000}) have been efficiently combined with MDPs to study sequential stochastic models using DP. For instance, \cite{savas:2019tac,savas:2022tac} considered the entropy as the reward in MDP and partially observable MDP (POMDP) to synthesize the control policies by maximizing the entropy. Other works integrate the information-theoretic quantities as regularized terms. 
In \cite{molloy:2023tac}, an entropy-regularized POMDP is used for active state estimation. Perhaps the most relevant work to ours is \cite{tanaka:2021tac}, which introduced the transfer-entropy-regularized MDP (TERMDP) for feedback control problems. Their approach seeks to minimize the control cost subject to a soft constraint, expressed through transfer entropy, a variant of DI, capturing the uncertainty between the state and control processes. Unfortunately, constructing the transfer-entropy regularizer with a standard state-dependent cost results in a nonconvex optimization problem, which makes the proposed dynamic forward–backward Blahut–Arimoto algorithm (BAA) \cite{blahut:1972} challenging to converge even to a local optimum, while incurring very high computational complexity.

Recently, we proposed a DP approach based on the information state, or belief, to study a lower bound for a zero-delay lossy compression problem by minimizing DI subject to a single-letter distortion function \cite{he:2024}. To address the well-known “curse of dimensionality” arising in the resulting DP recursion, we employed an ADP methodology grounded in {\it approximation in policy space} \cite{bertsekas:2019}. This ADP framework seeks suboptimal solutions by optimizing over a tractable, parameterized subset of the policy space. Although \cite{he:2024} did not explicitly involve control or MDPs, the DP techniques applied therein constitute a principled and widely adopted methodology for tackling information-theoretic MDPs and POMDPs, also known as information-state MDPs \cite{bertsekas:2019}.

Another major ADP approach, instead of directly targeting policies, is {\it approximation in value space}, which focuses on approximating the cost function. This strategy is particularly advantageous for MDPs with prohibitively large or continuous state spaces, as it enables more scalable methods that mitigate computational intractability (see, e.g., \cite{bertsekas:2019}). A notable class of methods within this paradigm is the rollout, which begins with an offline base-policy (heuristic) approximation and then applies online cost improvement. Rollout has demonstrated strong empirical effectiveness in reducing complexity while preserving performance guarantees \cite{bertsekas:2019}, and can also be interpreted as a form of model predictive control (MPC) that leverages a base policy \cite{bertsekas:2021rollout}.

{\it Contributions:} In this work, we consider a discrete-time finite-horizon directed information-constrained MDP (DI-MDP) problem in a feedback-controlled system (see Fig. \ref{Fig:system-MDP}), where the goal is to identify the optimal control policy that minimizes DI payoff subject to stage-wise state-dependent cost constraints. 
We first provide a new DP reformulation for the studied problem using a \textit{Q-factor} representation in terms of state and control policy pair, which can facilitate offline approximation for the base policy. 
We then propose a novel truncated-rollout-based backward-forward ADP framework that mitigates the computational burden of the studied problem by decoupling the offline and online phases. The offline training phase involves backward functional base policy and associated Q-factor approximation using a rolling or short horizon (see Algorithm \ref{Algo:Offline}). The online phase conducts forward rollout optimization with lookahead minimization followed by instantaneous Q-factor evaluations, to realize policy and cost improvement over any finite horizon (see Algorithm \ref{Algo:Online} and Theorem \ref{theorem:rollout-cost-improvement}). Both phases admit stage-wise convergence guarantees (see Lemmas \ref{Lemma:Convergence}, \ref{Lemma:stopping-criterion}). 
We corroborate our theoretical results with numerical simulations, demonstrating the cost improvement of the rollout method compared to a previously proposed policy approximation method \cite{he:2024} (see Fig. \ref{Fig:numerical-invariant-rollout}), and the computational complexity observed in executing the offline and online phases for the two methods (see Table \ref{Tab:computation-complexity}), where the guaranteed performance shows the scalability with increased computational resources.

\textit{Notation}: We denote the set of real numbers by $\mathbb{R}$. $\mathbb{N}\triangleq\{1,2,\ldots\}$, $\mathbb{N}_0\triangleq\{0,1,\ldots\}$, and $\mathbb{N}_j^N\triangleq\{j,\ldots,N\},~j\leq{N},~N\in\mathbb{N}$. We denote a sequence of random variables (RVs) by $X^t=\{X_0,X_1,\ldots,X_t\},t\in\mathbb{N}_0^N$ and their real values by $x^t\in\mathcal{X}^t=\{\mathcal{X}_0,\ldots,\mathcal{X}_t\}$, where $\mathcal{X}_t$ denotes the finite set and hence $\mathcal{X}^t$ the set sequence. A truncated sequence of RVs is denoted by $X_{j}^t=\{X_j,\ldots,X_t\}, j\in\mathbb{N}_0^t, {t}\geq{j}$, and its realizations by $x_{j}^t=\{x_j,\ldots,x_t\}\in\mathcal{X}_{j}^t=\{\mathcal{X}_{j},\ldots,\mathcal{X}_t\}$,~${t}\geq{j}$. The distribution of a RV $X$ on $\mathcal{X}$ is denoted by $P(x)$ and the conditional distribution of a RV $U$ given $X=x$ is denoted by $P(u|x)$. Functional dependencies are indicated with square brackets, e.g. $P[Q](x)$ and $P[u](x)$ represent a distribution $P(x)$ that depends on another distribution $Q$ or another realization $u$, respectively. The expectation operator is denoted by $\mathbb{E}\{\cdot\}$, and $\mathbb{E}^{\mu}\{\cdot\}$ denotes expectation with respect to a given distribution $\mu(\cdot)$.

\section{Problem Statement and Preliminaries}
\label{section:problem-statement}
We consider the controlled discrete-time dynamic system over any finite horizon, illustrated in Fig. \ref{Fig:system-MDP}. %Each component of the system is described as follows.
%\vspace{-0.2cm}
\begin{figure}[t]
    \centering
    \includegraphics[width=0.8\columnwidth]{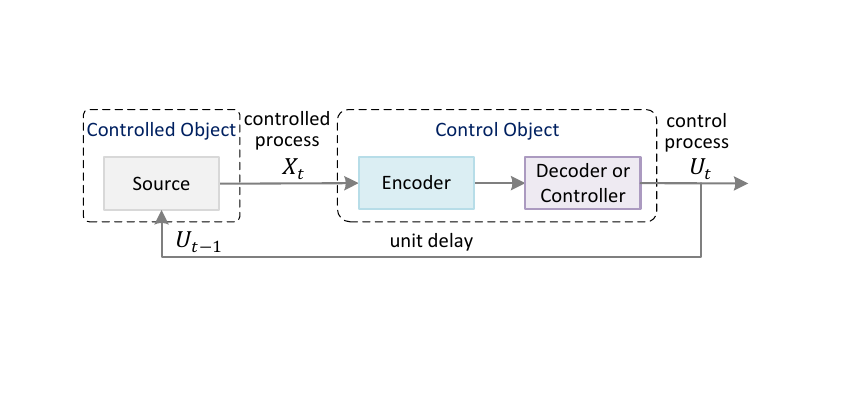}
    \caption{Controlled discrete-time dynamic system formulated as a finite-horizon MDP}
    \label{Fig:system-MDP}
\end{figure}
% \vspace{-0.2cm}
\subsection{System Model}
\label{system-model}
First, we describe each component of the system model in Fig. \ref{Fig:system-MDP}. Consider a controlled source process $X^t\in{\cal X}^t$ defined by a time index $t\in\mathbb{N}_0^N$ with given initial distribution $X_0\sim P_{0}(x_0)$ and a discrete control process $U^t\in{\cal U}^t$ in the system. We assume that the elements of the sets ${\cal X}^t$ and ${\cal U}^t$ are finite. At each time $t$, the controlled process $\{X_t:~t=0,\ldots, N\}$ is Markov conditional on the past controlled object $X_{t-1}$ and the past control ${U}_{t-1}=u_{t-1}$. The controlled process is defined by a transition kernel given by the conditional probability distribution
\begin{align}
    P_t(x_t|x_{t-1},u_{t-1}).
\label{Eq:Info-Trans-Kernel}
\end{align}
The control object or control process $\{U_t:t=0,\ldots, N\}$ is generated according to a randomized history-dependent control policy, hereinafter denoted by $\mu_t$, i.e., %and synthesized as another conditional probability distribution 
\begin{align}
    \mu_t = P_t(u_t|u^{t-1},x_t),
\label{Eq:Ctrl-Policy}
\end{align}
with the initial control policy $\mu_0$ to represent $P_0(u_0|x_0)$. Note that \eqref{Eq:Ctrl-Policy} depends only on the current controlled object $X_t$ since such a structural property was identified in \cite[Section IV]{charalambous:2014ecc}, \cite[Theorem 4.1]{stavrou:2018siam}.  On the other hand, for each $t=0,1,\ldots,N$, the controlled object 
$X_t$  is ``partially observed" through the control $U_t$ after applying the policy $\mu_t$ by the control object. The information of the controlled object filtered from the control process is summarized via an information state or belief given by the following recursion
\begin{align}
    &P_{t+1}(x_t|u^t) = \frac{P_t(x_t,u^t)}{P_{t+1}(u^t)}\nonumber\\
    &=\frac{\sum_{ x_{t-1}\in\mathcal{X}_{t-1}}\mu_t P_{t}(x_t|x_{t-1},u_{t-1})P_{t}(x_{t-1}|u^{t-1})}
    {\sum_{x_{t-1}^t\in\mathcal{X}_{t-1}^t}\mu_t P_{t}(x_t|x_{t-1},u_{t-1})P_{t}(x_{t-1}|u^{t-1})},
\label{Eq:Info-State-Update}
\end{align}
which is Markov, conditional on $u^{t-1}, P_{t}(x_{t-1}|u^{t-1})$. Finally, the control process $\{U_t:~t=0,1,\ldots, N\}$ induces the conditional probability distribution $P_t(u_t|u^{t-1}),~\forall{t}$, hereafter denoted by $\nu_t$, which is specified by the control object and information state via the following expression
\begin{align}
    \nu_t = P_t(u_t|u^{t-1})=&\sum_{x^t_{t-1}\in{\cal X}^t_{t-1}}\mu_t P_t(x_t|x_{t-1},u_{t-1})\nonumber\\
    &\qquad\times P_{t}(x_{t-1}|u^{t-1}).
\label{Eq:Output-Distribution}   
\end{align}

Here, by denoting the information state by $b_t$ and $P_t(x_t|x_{t-1},u_{t-1})$ by $w_t$, \eqref{Eq:Info-State-Update} can be expressed as a system transition equation of the form $b_{t+1}=f_t(b_{t},\mu_t,w_t)$.

\subsection{System Performance}
We specify the system performance using an information-theoretic cost as the objective function, subject to state-dependent cost constraints. Define the decision policy $\pi$ over the finite horizon $N$ as a sequence of control policies $\pi=\{\mu_0,\ldots,\mu_N\}$, and let the set of all admissible decision policies be $\Pi$ such that $\pi\in\Pi$. 
At each time stage $t\in\mathbb{N}_0^N$, the information-theoretic cost of the system incurred by the control policy $\mu_t$ is quantified by the conditional mutual information \cite{cover-thomas:2006}
\begin{align}
    I(X_t;U_t|&U^{t-1}) 
    \triangleq \mathbb{E}^{\mu_t} \Big\{\log\Big(\frac{\mu_t}{\nu_t}\Big)\Big\}\nonumber\\
    &=\sum_{\substack{x_{t-1}^t\in\mathcal{X}_{t-1}^t\\u^t\in\mathcal{U}^t}} \log\Big(\frac{\mu_t}{\nu_t}\Big) \mu_t w_t b_t P_t(u^{t-1}).
\label{Eq:Stage-Cost}
\end{align}
We define directed information \cite{massey:1990} theoretic cost as the additive cost functional 
\begin{align}
    C^{\pi}(X^N,U^N) \triangleq \sum_{t=0}^{N} I(X_t;U_t|U^{t-1}).
\label{Eq:Add-Cost}
\end{align}
Given a state-dependent cost function $\rho_t(x_t,u_t)$ at each $t$, defined over the state-control pair as $\rho_t:\mathcal{X}_t\times \mathcal{U}_t\to\mathbb{R}$, and a sequence of nonnegative constraints thresholds $\left\{D_0,\ldots,D_N\right\}$, the stage-wise state-dependent cost constraint associated with a control policy $\mu_t$ is specified by $\mathbb{E}^{\mu_t}\{\rho_t(X_t,U_t)\}\leq D_t$, where
\begin{align}
    \mathbb{E}^{\mu_t}\{\rho_t(X_t,U_t)\} =& \sum_{\substack{x_{t-1}^t\in\mathcal{X}_{t-1}^t\\u^t\in\mathcal{U}^t}}\rho_t(x_t,u_t)\mu_t w_tb_tP_t(u^{t-1}). 
\label{Eq:Info-Theo-Constraint}
\end{align}
Our DI-MDP is formulated as a constrained optimization problem, which seeks a decision policy $\pi$ that solves the following sequential stochastic optimization problem
\begin{align}
\begin{split}
    \inf_{\pi\in\Pi}& \quad C^{\pi}(X^N,U^N) \\
    \text{s.t.}& \quad \mathbb{E}^{\mu_t}\{\rho_t(X_t,U_t)\}\leq D_t,\forall t\in\mathbb{N}_0^N.
\end{split}
\label{Problem:Info-Theo-MDP}
\end{align}

We state some noteworthy remarks related to \eqref{Problem:Info-Theo-MDP}.

\begin{remark}
\label{remark:1}
{\bf (i)} Solving \eqref{Problem:Info-Theo-MDP} is equivalent to solving the lower bound (nonanticipative rate-distortion function) of the finite horizon sequential source coding problem with stage-wise distortion constraints \cite{stavrou:2018siam}, where the information-theoretic payoff \eqref{Eq:Add-Cost} interprets the expected compression rate, the control policy $\mu_t$ represents the test-channel (i.e., reconstruction probability distribution), and \eqref{Eq:Info-Theo-Constraint} as the stage-wise fidelity constraint.
{\bf (ii)} In contrast, the TERMDP problem in \cite{tanaka:2021tac} adopts a different formulation by reversing the objective and the constraint in \eqref{Problem:Info-Theo-MDP}; the payoff is regularized by transfer entropy. This leads to a nonconvex optimization problem (as discussed in \cite{tanaka:2021tac}), where the coupling between the information-regularization and the standard cost terms hinders the attainment of the global optimum.
\end{remark} 

\subsection{Reformulation of \eqref{Problem:Info-Theo-MDP}}

We now present the reformulation of problem \eqref{Problem:Info-Theo-MDP}.
We first leverage the key convexity properties of DI and $\Pi$ from \cite[Theorem 2, 3]{he:2024} with given $b_t$ obtained for fixed $U^{t-1}=u^{t-1}$ for any $t\in\mathbb{N}_0^N$. This allows to apply the Lagrange duality theorem \cite{boyd:2004} with Lagrange multipliers $\{s_t\leq0,~t\in\mathbb{N}_0^N\}$, converting \eqref{Problem:Info-Theo-MDP} into an unconstrained convex optimization problem. We then reformulate it as a finite-horizon \textit{information-state MDP} \cite{bertsekas:2019}, where for any $t\in\mathbb{N}_0^N$, each element of \textit{information state} $b_t$ takes values in the continuous interval $(0,1)$, resulting in a continuous information-state space encompassing an infinite number of possible distributions representing $b_t$.
The \textit{control policy} (action) $\mu_t$, the \textit{random disturbance} $w_t$, and the stage-wise \textit{cost function} $g_t(b_t,\mu_t)=\log\big( \frac{\mu_t}{\nu_t}\big)-s_t(\rho_t(x_t,u_t)-D_t)$ are defined accordingly. 
Table \ref{table:reformulated-variables} summarizes the correspondence between the components of this information-state MDP and those of the original problem \eqref{Problem:Info-Theo-MDP}.
\begin{table}
\centering
\caption{Information-State MDP Corresponds to \eqref{Problem:Info-Theo-MDP}}
\begin{tabular}{ll}
    \hline
    Information-state MDP&
    Connection to \eqref{Problem:Info-Theo-MDP}\\  \hline
    information state & $b_t=P_t(x_{t-1}|U^{t-1}=u^{t-1})$ \\
    %disturbance & $w_t=P_t(x_t|x_{t-1},u_{t-1})$ \\
    control policy (action) & $\mu_t=P_t(u_t|U^{t-1}=u^{t-1},x_t)$ \\
    transition & $P_t(b_{t+1}|b_t,\mu_t)$ \\%\eqref{Eq:Info-State-Update} \\
    \makecell[l]{cost function\\(in contrast of reward)} & \makecell[l]{
        $g_t(b_t,\mu_t)$\\ $\quad~=\log\big( \frac{\mu_t}{\nu_t}\big)-s_t(\rho_t(x_t,u_t)-D_t)$
    } \\
    \hline
\label{table:reformulated-variables}
\end{tabular}
\end{table}

By invoking the \textit{principle of optimality} \cite{bertsekas:2019}, we obtain the following finite horizon stochastic DP recursion in Q-factor form, expressed explicitly in terms of any given information state $b_t$ and corresponding control policy $\mu_t$ pair, and computed backward in time $t=N,\ldots,0$
\begin{align}
    Q_t^*&(b_t,\mu_t)= \sum_{x_{t-1}^t\in\mathcal{X}_{t-1}^t,u_t\in\mathcal{U}_t}  \Big\{\Big(\log\Big( \frac{\mu_t}{\nu_t}\Big)- s_t \rho_t(x_t,u_t)\nonumber\\
    &+ \min_{\mu_{t+1}} Q_{t+1}^*(b_{t+1},\mu_{t+1})\Big) w_t\mu_t b_t \Big\}+s_t D_t,
\label{Eq:DP-Qfactor}
\end{align}
with the terminal condition $Q_{t+1}^*(b_{N+1},\mu_{N+1})=0$ for all $b_{N+1}$ and $\mu_{N+1}$. 
Note that the given information state $b_t$ for fixed $U^{t-1}=u^{t-1}$ appearing in \eqref{Eq:DP-Qfactor} can be determined at each time-stage $t$ through the recursion update \eqref{Eq:Info-State-Update}, and this update is Markovian conditioned on the current $U^{t-1}$ and $b_t$. Furthermore, at $t=0$, we adopt the initial assumptions of our system model by setting the initial control policy and marginal distribution as $\mu_0=P_0(u_0|x_0)$ and $P_0(u_0|u_{-1})=P_0(u_0)$ respectively. Hence, the recursion at the initial time stage in \eqref{Eq:DP-Qfactor} is adjusted accordingly.

The stochastic DP recursion given by \eqref{Eq:DP-Qfactor} can be solved using the backward-forward method from \cite{he:2024}, which relies heavily on finely discretizing the continuous information-state space into a large finite information-state space $\mathcal{B}_t$ per stage during the offline training phase. The offline computational complexity due to the curse of dimensionality of solving \eqref{Eq:DP-Qfactor}  motivates the usage of a novel, efficient ADP method introduced in the next section to obtain reachable results.

\section{Main Results}
\label{section:main-results}

In this section, we present our main results. We propose a truncated rollout-based ADP Framework that admits \textit{approximation in value space} to find the optimal solution.

The rollout approach is a sample-based heuristic method where an offline-approximated base policy is further improved in an online mechanism \cite{bertsekas:2019}. 
To obtain the offline-approximated base policy via sampling, we avoid computing $Q_t(b_t,\mu_t)$ backward over the entire horizon $N$. Instead, we operate over a truncated horizon, which naturally leads to a corresponding truncated rollout approach. We first introduce the approximate Q-factor $\tilde{Q}_t(b_t,\mu_t)$ at any $t\in\mathbb{N}_0^N$
\begin{align}
    &\tilde{Q}_t(b_t,\mu_t)
    =\tilde{Q}_t[g_t,\tilde{Q}_{t+1}](b_t,\mu_t)=\sum_{\substack{x_{t-1}^t\in\mathcal{X}_{t-1}^t\\u_t\in\mathcal{U}_t}}  \Big\{\Big(\log\Big( \frac{\mu_t}{\nu_t}\Big)
    \nonumber\\
    &- s_t \rho_t(x_t,u_t)+\tilde{Q}_{t+1}(b_{t+1},\mu_{t+1})\Big) w_t\mu_t b_t \Big\}+s_t D_t.
\label{Eq:approx-Qfactor}
\end{align}
Herein, we adopt a receding control approach by approximating the future $\tilde{Q}_{t+1}$ over a much shorter \textit{rolling horizon} $N_s$ (also as known as a sampling or \textit{receding horizon} \cite{bertsekas:2019}) compared with $N$ ($N_s\ll N$). Thus, $\tilde{Q}_{t}$ is rewritten as
\begin{align}
    &\tilde{Q}_t(b_t,\mu_t)= \tilde{Q}_t[g_t,\tilde{Q}_{N_s}^{\bar{\pi}}](b_t,\mu_t) =\sum_{\substack{x_{t-1}^t\in\mathcal{X}_{t-1}^t\\u_t\in\mathcal{U}_t}}  \Big\{\Big(\log\Big( \frac{\mu_t}{\nu_t}\Big)\nonumber\\
    &- s_t \rho_t(x_t,u_t)+{Q}_{N_s}^{\bar{\pi}}(b_{t+1},\mu_{t+1})\Big)w_t\mu_t b_t \Big\}+s_t D_t,
\label{Eq:Q-Factor-base-policy}
\end{align}
where ${Q}_{N_s}^{\bar{\pi}}$ is estimated backward over $N_s$, and $\bar{\pi}$ is the obtained base policy defined on a discretized information-state space denoted by $\bar{\mathcal{B}_t}$ such that $b_t\in\bar{\mathcal{B}_t}$ with its values selected from the interval $(0,1)$. 
Thus, the backward computation admits an offline training procedure with analytical recursions of the base control policy. To facilitate the backward approximation of $Q_{N_s}^{\bar{\pi}}$, we employ the following lemmas adapted from \cite{he:2024}.

\begin{lemma}{(Double minimization)}
\label{lemma:double-min}
For any $t\in\mathbb{N}_0^N$, let $s_t\leq0$ and $D_t>0$, then for a fixed $w_t$, and a given information state $\{b_t\in\bar{\mathcal{B}}_t\}$ obtained for a fixed $U^{t-1}=u^{t-1}$, minimizing \eqref{Eq:DP-Qfactor} admits a double minimum as
\begin{align}
    &Q_t^*(b_t,\mu_t^*) = \min_{\mu_t}\min_{\nu_t}\sum_{\substack{x_{t-1}^t\in\mathcal{X}_{t-1}^t\\u_t\in\mathcal{U}_t}}\Big\{\Big(\log\Big( \frac{\mu_t}{\nu_t}\Big)-s_t \rho_t(x_t,u_t) \nonumber\\
    &+Q_{t+1}^*(b_{t+1},\mu_{t+1})\Big)\mu_tw_t b_t \Big\}+ s_t D_t,~t=N,\ldots,0.
\label{Eq:DM-CostToGo}
\end{align}
Moreover, for a fixed $\mu_t$, the right hand side (RHS) of \eqref{Eq:DM-CostToGo} is minimized by \eqref{Eq:Output-Distribution}, whereas for fixed $\nu_t$, the RHS of (\ref{Eq:DM-CostToGo}) is minimized by 
\begin{align}
    \mu_t = \frac{\nu_tA_t[b_t](x_t,u_t,s_t)}{\sum_{u_t\in\mathcal{U}_t}\nu_tA_t[b_t](x_t,u_t,s_t)},
\label{Eq:AM-Policy}
\end{align}
where $A_t[b_t](x_t,u_t,s_t) = e^{s_t\rho_t(x_t,u_t)-Q_{t+1}^*(b_{t+1},\mu_{t+1})}$.
\end{lemma}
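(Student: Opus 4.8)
The plan is to establish the lemma in two stages, mirroring the classical Blahut--Arimoto double-minimization argument. First I would show that treating $\nu_t$ as a free variable in \eqref{Eq:DM-CostToGo} and minimizing over it recovers exactly the recursion \eqref{Eq:DP-Qfactor}, which simultaneously proves the claimed double-minimum representation and the first optimality condition. Then I would derive the second optimality condition \eqref{Eq:AM-Policy} by fixing $\nu_t$ and solving a constrained convex minimization over $\mu_t$.

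For the first stage, fix an arbitrary admissible $\mu_t$. The only $\nu_t$-dependence in the bracketed integrand of \eqref{Eq:DM-CostToGo} is through the term $-\log\nu_t$. Performing the summation over $x_{t-1}^t$ first, I would collect the coefficient of $-\log\nu_t$, which is precisely the induced marginal $\sum_{x_{t-1}^t}\mu_t w_t b_t$ appearing in \eqref{Eq:Output-Distribution}. The resulting $\nu_t$-dependent part is then a cross-entropy between this induced marginal and the free variable $\nu_t$; by Gibbs' inequality (equivalently, nonnegativity of relative entropy), it is minimized uniquely when $\nu_t$ equals that induced marginal, i.e.\ \eqref{Eq:Output-Distribution}. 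Substituting this minimizer back collapses $\log\big(\mu_t/\nu_t\big)$ into the conditional mutual information form \eqref{Eq:Stage-Cost}, so the double-minimized objective coincides with \eqref{Eq:DP-Qfactor}, establishing both the double-minimum representation and the first optimality condition at once.

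For the second stage, fix $\nu_t$ and introduce $A_t[b_t](x_t,u_t,s_t)=e^{s_t\rho_t(x_t,u_t)-Q_{t+1}^*(b_{t+1},\mu_{t+1})}$, so that the bracket in \eqref{Eq:DM-CostToGo} rewrites as $\log\big(\mu_t/(\nu_t A_t)\big)$. The objective then becomes a weighted generalized Kullback--Leibler functional in $\mu_t$, which is convex in $\mu_t$ by \cite[Theorems 2, 3]{he:2024}. I would append the normalization constraints $\sum_{u_t}\mu_t=1$ (one per $x_t$) through Lagrange multipliers, differentiate the Lagrangian with respect to each $\mu_t$, and impose the stationarity condition $\log\big(\mu_t/(\nu_t A_t)\big)+1+\lambda(x_t)=0$. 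Solving yields $\mu_t\propto\nu_t A_t$, and enforcing normalization over $u_t$ gives exactly \eqref{Eq:AM-Policy}; convexity guarantees this stationary point is the unique global minimizer.

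The main obstacle I anticipate is the implicit dependence of the cost-to-go term $Q_{t+1}^*(b_{t+1},\mu_{t+1})$ on the optimization variable $\mu_t$, since the information state evolves as $b_{t+1}=f_t(b_t,\mu_t,w_t)$ via \eqref{Eq:Info-State-Update}. In the second-stage derivation this couples $A_t$ to $\mu_t$, so the clean first-order condition only holds if $A_t$ is treated as fixed data at the current iterate---precisely what the notation $A_t[b_t](\cdot)$ signals and what the Q-factor formulation is designed to exploit. Justifying that the resulting alternating, coordinate-wise scheme attains the \emph{joint} minimum over $(\mu_t,\nu_t)$, rather than merely a stationary point, is where I would lean on the convexity of directed information in the policy and the convexity of the admissible set $\Pi$ imported from \cite[Theorems 2, 3]{he:2024}; these ensure the alternating minimization is well posed and that its fixed points are globally optimal.
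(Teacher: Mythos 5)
Your proposal is correct and follows exactly the Blahut--Arimoto double-minimization route that the paper relies on: the paper gives no proof of this lemma itself (it is stated as adapted from \cite{he:2024}), but the two coordinate-wise minimizers you derive---the induced marginal via Gibbs' inequality for fixed $\mu_t$, and the normalized $\nu_t A_t$ expression via Lagrangian stationarity for fixed $\nu_t$---are precisely the alternating updates \eqref{policy-update}--\eqref{output-update} that Lemma \ref{Lemma:Convergence} iterates. The one delicate point you flag, the dependence of $Q_{t+1}^*(b_{t+1},\mu_{t+1})$ on $\mu_t$ through $b_{t+1}=f_t(b_t,\mu_t,w_t)$, is handled the same way here as in the paper, namely by freezing $A_t[b_t]$ at the current iterate and appealing to the convexity results of \cite[Theorems 2, 3]{he:2024} for global optimality of the alternating scheme.
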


\begin{lemma}{(Base policy approximation algorithm)}
\label{Lemma:Convergence}
For each $t\in\mathbb{N}_0^N$, consider a fixed $w_t$, and a given $b_t$ obtained for a fixed $U^{t-1}=u^{t-1}$.
Moreover, for any $t$, let $s_t\leq0$ and $\nu_t^{(0)}$ be the initial output distribution with non-zero components, and let $\nu_t^{(k+1)}=\mu_t[\nu_t^{(k)}]$ and $\mu_t^{(k+1)}=\nu_t[\nu_t^{(k)}]$ be expressed wrt the previous $k^{\text{th}}$ iteration as follows
\begin{align}
    &\mu_t^{(k+1)}= \frac{\nu_t^{(k)}A_t[b_t](x_t,u_t,s_t)}{\sum_{u_t\in\mathcal{U}_t}\nu_t^{(k)}A_t[b_t](x_t,u_t,s_t)},
    \label{policy-update}
    \\
    &\nu_t^{(k+1)} = \nu_t^{(k)}\sum_{x_t\in\mathcal{X}_t}  \frac{b_tw_tA_t[b_t](x_t,u_t,s_t)}{\sum_{u_t\in\mathcal{U}_t}\nu_t^{(k)}A_t[b_t](x_t,u_t,s_t)}.
    \label{output-update}
\end{align}
Then as $k\to\infty$, we obtain for any $t$ that 
\begin{align*}
    % D_t[u^{t-1},b_t,\mu_t^{(k)}] &\to D_{s_t}[u^{t-1},b_t] \nonumber\\
    Q_t(b_t,\mu_t^{(k)}) &\to Q_t^*(b_t,\mu_t^*),
\end{align*}
whereas $Q_t(b_t,\mu_t^{(k)})$ is expressed as
\begin{align}
    &Q_t(b_t,\mu_t^{(k)}) = \sum_{\substack{x_t\in\mathcal{X}_t,u_t\in\mathcal{U}_t}}b_tw_t\mu_t^{(k)} 
    \Big(\log\Big(\frac{\mu_t^{(k)}}{\nu_t^{(k)}}\Big)\nonumber\\&\qquad\qquad\qquad\qquad\qquad\qquad
    +Q_{t+1}^*(b_{t+1},\mu_{t+1})\Big).
\label{Eq:CostToGo-info.state-policy}
\end{align}
\end{lemma}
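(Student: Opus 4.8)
The plan is to recognize the recursions \eqref{policy-update}--\eqref{output-update} as a two-block alternating (coordinate-wise) minimization of the single-stage Lagrangian functional that appears inside the double minimum of Lemma~\ref{lemma:double-min}, i.e., a Blahut--Arimoto-type scheme \cite{blahut:1972,he:2024}. Throughout, the stage index $t$, the disturbance $w_t$, the belief $b_t$, and the already-computed future factor $Q_{t+1}^*(b_{t+1},\mu_{t+1})$ are frozen (the latter supplied by the backward pass), so that with the effective stage cost $c_t(x_t,u_t)\triangleq -s_t\rho_t(x_t,u_t)+Q_{t+1}^*(b_{t+1},\mu_{t+1})$, for which $A_t[b_t]=e^{-c_t}$, the object to be minimized is
\[
F_t(\mu_t,\nu_t)=\sum_{x_{t-1}^t,u_t}\Big(\log\tfrac{\mu_t}{\nu_t}+c_t(x_t,u_t)\Big)\mu_tw_tb_t+s_tD_t .
\]

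First I would verify that the iterates are exactly the two partial minimizers. Lemma~\ref{lemma:double-min} already states that \eqref{Eq:AM-Policy} solves $\min_{\mu_t}F_t(\cdot,\nu_t)$ for fixed $\nu_t$, which is precisely \eqref{policy-update} with $\nu_t=\nu_t^{(k)}$, and that the induced output distribution \eqref{Eq:Output-Distribution} solves $\min_{\nu_t}F_t(\mu_t,\cdot)$ for fixed $\mu_t$. Substituting the freshly computed $\mu_t^{(k+1)}$ into $\nu_t=\sum_{x_t}\mu_t w_t b_t$ and simplifying yields exactly \eqref{output-update}, so $\nu_t^{(k+1)}=\arg\min_{\nu_t}F_t(\mu_t^{(k+1)},\nu_t)$. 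Next I would establish convexity and monotonicity: the relative-entropy term $\sum \mu_t w_t b_t\log(\mu_t/\nu_t)$ is jointly convex in $(\mu_t,\nu_t)$ while the remaining terms are affine in $\mu_t$, so $F_t$ is jointly convex over the product of probability simplices (the convexity of DI and of $\Pi$ invoked via \cite[Thm.~2,3]{he:2024}). Since each half-step is an exact partial minimization, $Q_t(b_t,\mu_t^{(k+1)})=F_t(\mu_t^{(k+1)},\nu_t^{(k+1)})\le F_t(\mu_t^{(k+1)},\nu_t^{(k)})\le F_t(\mu_t^{(k)},\nu_t^{(k)})=Q_t(b_t,\mu_t^{(k)})$, where the value $Q_t(b_t,\mu_t^{(k)})$ is \eqref{Eq:CostToGo-info.state-policy} evaluated at the $k$-th iterate. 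The sequence is thus non-increasing and bounded below by $Q_t^*(b_t,\mu_t^*)$, hence convergent.

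The main obstacle is to show that this monotone limit is the \emph{global} minimum $Q_t^*$ rather than a spurious fixed point. Here I would use a Pythagorean/KL identity: writing $\nu_t^*=\sum_{x_t}\mu_t^*w_tb_t$ for the optimal induced output and $D(\cdot\|\cdot)$ for relative entropy, varying only the $-\log\nu_t$ contribution gives the exact relation $F_t(\mu_t^*,\nu_t^{(k)})=Q_t^*(b_t,\mu_t^*)+D(\nu_t^*\|\nu_t^{(k)})$. Combining this with $\min_{\mu_t}F_t(\mu_t,\nu_t^{(k)})\le F_t(\mu_t^*,\nu_t^{(k)})$, the closed form $\min_{\mu_t}F_t(\cdot,\nu_t^{(k)})=-\sum_{x_t}w_tb_t\log\big(\sum_{u_t}\nu_t^{(k)}A_t[b_t]\big)+s_tD_t$, and the stationarity conditions \eqref{Eq:AM-Policy}--\eqref{Eq:Output-Distribution} satisfied by $(\mu_t^*,\nu_t^*)$, I would derive the Csisz\'ar--Tusn\'ady five-point inequality
\[
Q_t(b_t,\mu_t^{(k+1)})-Q_t^*(b_t,\mu_t^*)\le D(\nu_t^*\|\nu_t^{(k)})-D(\nu_t^*\|\nu_t^{(k+1)}) .
\]
The one nontrivial estimate is the lower bound on the right-hand side, which follows by applying Jensen's inequality (concavity of $\log$) to the ratio $\nu_t^{(k+1)}/\nu_t^{(k)}$ after inserting the optimality fixed-point identity $\sum_{x_t}w_tb_t\mu_t^*(u_t|x_t)=\nu_t^*$. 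Summing over $k$ then telescopes the bound to at most $D(\nu_t^*\|\nu_t^{(0)})<\infty$, finite precisely because $\nu_t^{(0)}$ is initialized with full support as assumed; the nonnegative left-hand terms are therefore summable and must vanish, and with the monotonicity already shown this forces $Q_t(b_t,\mu_t^{(k)})\to Q_t^*(b_t,\mu_t^*)$. The delicate modeling point underlying the whole argument is the dependence of $c_t$ on $b_{t+1}=f_t(b_t,\mu_t,w_t)$: for a single-stage convex minimization it must be frozen via the backward recursion, exactly as in the classical convergence conditions of \cite{blahut:1972,he:2024}.
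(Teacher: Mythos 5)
Your proposal is correct and takes essentially the same route the paper relies on: the paper gives no in-line proof of Lemma \ref{Lemma:Convergence} (it is adapted from \cite{he:2024}), but its surrounding machinery --- the double minimization of Lemma \ref{lemma:double-min}, the updates \eqref{policy-update}--\eqref{output-update}, and the Csisz\'ar-style bounds of Lemma \ref{Lemma:stopping-criterion} --- is exactly the Blahut--Arimoto/Csisz\'ar--Tusn\'ady alternating-minimization framework you reconstruct, including exact partial minimizers, monotone descent, the Pythagorean identity, the five-point inequality, and the telescoping bound by $D(\nu_t^*\|\nu_t^{(0)})<\infty$ enabled by the full-support initialization of $\nu_t^{(0)}$. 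Your explicit flagging of the need to freeze $Q_{t+1}^*(b_{t+1},\mu_{t+1})$ (hence $b_{t+1}$) with respect to $\mu_t$ is likewise consistent with the paper's implicit treatment of $A_t[b_t](x_t,u_t,s_t)$ as a fixed kernel supplied by the backward pass.
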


Therefore, to approximate $Q_{N_s}^{\bar{\pi}}$ over the rolling horizon $N_s$, the results provided by Lemma \ref{Lemma:Convergence} can be implemented via backward computation from the terminal stage $t=N$ to stage $t=N-N_s+1$, as detailed in Algorithm \ref{Algo:Offline}. The following lemma complements Algorithm \ref{Algo:Offline} by providing a stopping criterion to terminate the procedure after a finite number of iterations.
\begin{lemma}
\label{Lemma:stopping-criterion}
(Stopping criterion of Algorithm \ref{Algo:Offline}) 
For each $t\in\mathbb{N}_0^N$, the quantity $D_{s_t}$ given by 
\begin{align}
    D_{s_t}= \sum_{\substack{x_t\in\mathcal{X}_t u_t\in\mathcal{U}_t}}\mu_t^* w_tb_t\rho_t(x_t,u_t),
\label{distortion-point-s}
\end{align}
admits the following bounds
\begin{align}
    Q_t^*&(b_t,\mu_t)\leq  -\sum_{x_t\in\mathcal{X}_t}\big(b_tw_t\log(\sum_{u_t\in\mathcal{U}_t}\nu_tA_t[b_t](x_t,u_t,s_t))\big) \nonumber\\ 
    &-\sum_{u_t\in\mathcal{U}_t}\nu_tc_t[u^{t-1}](u_t)\log c_t[u^{t-1}](u_t) +s_t D_{s_t},
\label{Eq:CostToGo-upper-bound}\\
    Q_t^*&(b_t,\mu_t)\geq -\sum_{x_t\in\mathcal{X}_t}\big(b_tw_t\log(\sum_{u_t\in\mathcal{U}_t}\nu_tA_t[b_t](x_t,u_t,s_t))\big)\nonumber\\ 
    &-\max_{u_t\in\mathcal{U}_t}\log c_t[u^{t-1}](u_t)+s_t D_{s_t},
\label{Eq:CostToGo-lower-bound}
\end{align}
where $c_t[u^{t-1}](u_t)$ is expressed as a function of fixed $u^{t-1}$ 
\begin{align*}
    c_t[u^{t-1}](u_t) = &\sum_{x_t\in\mathcal{X}_t} \frac{b_tw_tA_t[b_t](x_t,u_t,s_t)}{\sum_{u_t\in\mathcal{U}_t}\nu_tA_t[b_t](x_t,u_t,s_t)}.
\end{align*}
\end{lemma}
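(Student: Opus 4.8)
\emph{Proof sketch.} The plan is to recognize \eqref{Eq:CostToGo-upper-bound}--\eqref{Eq:CostToGo-lower-bound} as Blahut--Arimoto-type sandwich bounds for the inner minimization over the output distribution $\nu_t$, and to obtain the two sides from two complementary applications of Jensen's inequality. Throughout I abbreviate $Z_t(x_t):=\sum_{u_t\in\mathcal{U}_t}\nu_t A_t[b_t](x_t,u_t,s_t)$, the quantity appearing inside both logarithms, and write $Z_t^*$, $\tilde{Z}_t$ for the same expression formed with $\nu_t^*$ and with the next iterate $\tilde\nu_t$, respectively.

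First I would carry out the inner minimization of Lemma~\ref{lemma:double-min} explicitly. For fixed $\nu_t$, minimizing the bracket in \eqref{Eq:DM-CostToGo} over the conditional $\mu_t(\cdot|x_t)$ is a Gibbs-type minimization whose optimizer is \eqref{Eq:AM-Policy}, namely $\mu_t=\nu_t A_t[b_t]/Z_t(x_t)$. Substituting it makes the term $\log(\mu_t/\nu_t)-s_t\rho_t+Q_{t+1}^*$ collapse to $-\log Z_t(x_t)$, and summing out $u_t$ via $\sum_{u_t}\mu_t=1$ leaves the reduced objective $F(\nu_t):=-\sum_{x_t}b_tw_t\log Z_t(x_t)$, so that $Q_t^*(b_t,\mu_t^*)=\min_{\nu_t}F(\nu_t)+s_tD_{s_t}$. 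Writing $\nu_t^*$ for the minimizer and using the stationarity condition of the update \eqref{output-update}, i.e. $c_t[u^{t-1}](u_t)=1$ on the support of $\nu_t^*$, reduces the claim to bracketing $F(\nu_t)-F(\nu_t^*)$ for the current iterate $\nu_t$.

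For the lower bound \eqref{Eq:CostToGo-lower-bound} I would write $F(\nu_t)-F(\nu_t^*)=\sum_{x_t}b_tw_t\log\!\big(Z_t^*(x_t)/Z_t(x_t)\big)$ and expand $Z_t^*(x_t)/Z_t(x_t)=\sum_{u_t}\nu_t^*(u_t)A_t[b_t]/Z_t(x_t)$. Since $\sum_{x_t}b_tw_t=1$, Jensen's inequality for the concave logarithm pulls this $b_tw_t$-average inside the log, and the resulting inner quantity $\sum_{x_t}b_tw_tA_t[b_t]/Z_t(x_t)$ is exactly $c_t[u^{t-1}](u_t)$; with $\sum_{u_t}\nu_t^*=1$ this yields $F(\nu_t)-F(\nu_t^*)\le\log\sum_{u_t}\nu_t^*(u_t)c_t[u^{t-1}](u_t)\le\max_{u_t}\log c_t[u^{t-1}](u_t)$, which is \eqref{Eq:CostToGo-lower-bound} after adding $s_tD_{s_t}$.

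For the upper bound \eqref{Eq:CostToGo-upper-bound} I would instead exploit optimality of $\nu_t^*$ through one step of the update map: let $\tilde\nu_t(u_t):=\nu_t(u_t)c_t[u^{t-1}](u_t)$ be the next iterate of \eqref{output-update}, a genuine distribution because $\sum_{u_t}\nu_t c_t=\sum_{x_t}b_tw_t=1$. Since $F(\nu_t^*)\le F(\tilde\nu_t)$, it suffices to prove the one-step descent $F(\nu_t)-F(\tilde\nu_t)\ge\sum_{u_t}\nu_t c_t\log c_t$. Here $\tilde{Z}_t(x_t)/Z_t(x_t)=\sum_{u_t}\mu_t(u_t|x_t)c_t[u^{t-1}](u_t)$ with $\mu_t(\cdot|x_t)$ a probability distribution, so a \emph{per}-$x_t$ application of Jensen ($\log\mathbb{E}\ge\mathbb{E}\log$) followed by the identity $\sum_{x_t}b_tw_t\mu_t(u_t|x_t)=\tilde\nu_t(u_t)$ gives exactly this descent; adding $s_tD_{s_t}$ yields \eqref{Eq:CostToGo-upper-bound}. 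The two sides are compatible since $\max_{u_t}\log c_t\ge\sum_{u_t}\nu_t c_t\log c_t$, the latter being a $\tilde\nu_t$-average of $\log c_t$. I expect the main obstacle to be orchestrating the two Jensen steps in opposite directions---averaging over $x_t$ for the lower bound versus per-$x_t$ for the one-step-descent upper bound---together with the bookkeeping that folds $s_t\rho_t$ and the future cost $Q_{t+1}^*$ into the single kernel $A_t[b_t]$ and that identifies the threshold $D_t$ with the operating distortion $D_{s_t}$ of \eqref{distortion-point-s} at the stationary multiplier.
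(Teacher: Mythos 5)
The paper itself does not include a proof of this lemma (it is stated as adapted from the cited prior work), but your argument is correct and is exactly the classical Blahut--Arimoto sandwich-bound reasoning that the lemma rests on: reduce \eqref{Eq:DM-CostToGo} to $\min_{\nu_t}F(\nu_t)$ via the inner Gibbs minimization over $\mu_t$, get the lower bound \eqref{Eq:CostToGo-lower-bound} by Jensen averaged over $(x_{t-1},x_t)$ with weights $b_tw_t$ (which sum to one), and get the upper bound \eqref{Eq:CostToGo-upper-bound} from the one-step descent $F(\tilde\nu_t)\le F(\nu_t)-\sum_{u_t}\nu_t c_t\log c_t$ with $\tilde\nu_t=\nu_t c_t$ followed by $F(\nu_t^*)\le F(\tilde\nu_t)$, each step checking correctly (including $\sum_{u_t}\nu_t c_t=1$ and $\sum_{x}b_tw_t\mu_t=\tilde\nu_t$). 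Your closing caveat is also the right one to flag explicitly: the bounds carry $s_tD_{s_t}$ while \eqref{Eq:DP-Qfactor} carries $s_tD_t$, so the statement implicitly operates at the parametric point where the attained distortion \eqref{distortion-point-s} coincides with the threshold; the only superfluous element in your sketch is the appeal to stationarity of \eqref{output-update} ($c_t\equiv 1$ on the support of $\nu_t^*$), which is never needed since both bounds hold for an arbitrary iterate $\nu_t$.
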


Lemma \ref{Lemma:stopping-criterion} generates a stopping criterion for Algorithm \ref{Algo:Offline} at the $k$-th iteration by setting the estimation error $\epsilon$ per time stage, i.e., $T_{U_t}[u^{t-1},b_t]-T_{L_t}[u^{t-1},b_t]$, where
\begin{align*}
    T_{U_t}[u^{t-1},b_t] &= -\sum_{u_t\in\mathcal{U}_t}\nu_tc_t[u^{t-1}](u_t)\log c_t[u^{t-1}](u_t),\nonumber\\
    T_{L_t}[u^{t-1},b_t] &= -\max_{u_t\in\mathcal{U}_t}\log c_t[u^{t-1}](u_t).
\end{align*}

\begin{algorithm}[t]
\caption{Offline Base Control Policy Approximation}
\label{Algo:Offline}
\begin{algorithmic}[1]
    \REQUIRE given $\{w_t:t\in\mathbb{N}_{N-N_s+1}^N\}$,\\ given base information state $b_t\in\bar{\mathcal{B}}_t$, Lagrange multipliers $\{s_t\leq0:t\in\mathbb{N}_{N_s}^N\}$, error tolerance $\epsilon>0$
    \STATE {\textbf{Initialize} $\{\nu_t^{(0)}:t\in\mathbb{N}_{N-N_s+1}^N\}$\;}\\
    \FOR{$t=N:N-N_s+1$}
    \STATE $k\leftarrow0$
    \WHILE {$T_{U_t}[u^{t-1},b_t] - T_{L_t}[u^{t-1},b_t]  > \epsilon$}
    \STATE $\mu_{t}^{(k)}\leftarrow$ (\ref{policy-update}) \;
    \STATE $\nu_t^{(k+1)}\leftarrow$ (\ref{output-update})\;
    \STATE $Q_t(b_t,\mu_t^{(k)})\leftarrow$~\eqref{Eq:CostToGo-info.state-policy} %\\ $ \mathbb{I}_t[y_{t-1},b_tw_t,P_t^{(k+1)}(u_t|y_{t-1},x_t)]$\;
    \STATE $k \leftarrow k + 1$\;
    \ENDWHILE
    \ENDFOR
    \STATE $Q_{N_s}^{\bar{\pi}}(b_t,\mu_t)\leftarrow Q^*_{N-N_s+1}[g_t,Q^*_{N-N_s+2}](b_t,\mu_t^*)$\;
    \ENSURE $\{\mu_{t}^*(b_t):t\in\mathbb{N}_{N-N_s+1}^N,b_t\in\bar{\mathcal{B}}_t\}$,\\
    $\{\nu_t^*[b_t]:t\in\mathbb{N}_{N-N_s+1}^N,b_t\in\bar{\mathcal{B}}_t\}$,\\
    $\{Q_{N_s}^{\bar{\pi}}(b_t,\mu_t):b_t\in\bar{\mathcal{B}}_t,\mu_t\in\mu_{t}^*(b_t)\}$. 
\end{algorithmic}
\end{algorithm}
\paragraph*{Comments on Algorithms \ref{Algo:Offline}} The control policy $\{\mu_{t}^*(b_t):t\in\mathbb{N}_{N-N_s+1}^N,b_t\in\bar{\mathcal{B}}_t\}$, the output distribution $\{\nu_t^*[b_t]:t\in\mathbb{N}_{N-N_s+1}^N,b_t\in\bar{\mathcal{B}}_t\}$, and the cost-to-go functions $\{Q_t^*:t\in\mathbb{N}_{N-N_s+1}^N\}$ are aprroximated over the base information-state space $\bar{\mathcal{B}}_t$ discretized under the fixed $U^{t-1}=u^{t-1}$. After computing these quantities backward over the rolling horizon $N_s$, we obtain $Q_{N_s}^{\bar{\pi}}(b_t,\mu_t)$ and proceed with online one-step truncated-rollout lookahead minimization starting from the initial information state $\tilde{b}_1 = P_0(x_0|u_0)$ and initial output distribution $\nu_0=P_0(u_0)$. The online procedure is detailed in Algorithm \ref{Algo:Online}.

\paragraph*{Comments on Algorithms \ref{Algo:Online}} In the online truncated-rollout approximation, for each $t$ with an associated information state ${b}_t$, the rollout control policy $\tilde{\mu}_t$ is determined by minimizing the approximate Q-factor $\tilde{Q}_t^{\bar{\pi}}({b}_t,\mu_t)$ averaged over $P_t(u^{t-1})$
\begin{align}
    \tilde{\mu}_t &= \arg\min_{\mu_t\in{\bar{\pi}}}\sum_{u^{t-1}\in\mathcal{U}^{t-1}} \tilde{Q}_t^{\bar{\pi}}({b}_t,\mu_t)P_t(u^{t-1}), 
\label{Eq:Online-Min}
\end{align}
where $\tilde{Q}_t^{\bar{\pi}}({b}_t,\mu_t)$ is obtained by one-step truncated-rollout lookahead minimizing approximate Q-factor \eqref{Eq:Q-Factor-base-policy}, expressed as
\begin{align}
    &\tilde{Q}_t^{\bar{\pi}}({b}_t,\mu_t) = \min_{\mu_t}\tilde{Q}_t[g_t,Q_{N_s}^{\bar{\pi}}]({b}_t,\mu_t).
\label{Eq:Approximate-Q-Factor}
\end{align}
Hence, the online rollout procedure at each $t$ involves first a functional minimization followed by an instantaneous policy evaluation. Specifically, the minimization in \eqref{Eq:Approximate-Q-Factor} corresponds to steps 3–9 of Algorithm \ref{Algo:Offline}, with $Q_t$ and $Q_{t+1}$ replaced by $\tilde{Q}_t$ and $Q_{N_s}^{\bar{\pi}}$ respectively, where convergence is guaranteed by Lemma \ref{Lemma:Convergence}. After determining the rollout policy $\tilde{\mu_t}$, the subsequent information state $\tilde{b}_{t+1}$ is updated via recursion in \eqref{Eq:Info-State-Update}. Consequently, our rollout approach yields an approximate solution to \eqref{Problem:Info-Theo-MDP}, generating an information-state trajectory $\{\tilde{b}_t,t\in\mathbb{N}_1^N\}$. The resulting rollout policy sequence, $\hat{\pi}=\{\mu_0,\tilde{\mu}_1,\ldots,\tilde{\mu}_N\}$, achieves the minimum defined by \eqref{Eq:Online-Min} for each $t\in\mathbb{N}_1^N$. We next show that the proposed rollout policy performs no worse than the base policy.
\begin{theorem}
\label{theorem:rollout-cost-improvement}
For any $t\in\mathbb{N}_1^N$, denote the by $\tilde{J}_t^{\hat{\pi}}(b_t)$ and $\tilde{J}_t^{\bar{\pi}}(b_t)$ the cost-to-go averaged over $P_t(u^{t-1})$ corresponding to the rollout policy $\hat{\pi}$ and base policy $\bar{\pi}$, respectively, starting at information state $b_t$, which are expressed as
\begin{align}
    &\tilde{J}_t^{\hat{\pi}}(b_t) = \sum_{u^{t-1}\in\mathcal{U}^{t-1}} \tilde{Q}_t^{\hat{\pi}}[g_t,Q_{N_s}^{\bar{\pi}}](b_t,\tilde{\mu}_t)P_t(u^{t-1}),
    \label{Eq:Ave-CTG-rollout}\\
    &\tilde{J}_t^{\bar{\pi}}(b_t) = \sum_{u^{t-1}\in\mathcal{U}^{t-1}} \tilde{Q}_t^{\bar{\pi}}[g_t,Q_{N_s}^{\bar{\pi}}](b_t,\mu_t)P_t(u^{t-1}).
\label{Eq:Ave-CTG-base}
\end{align}
Then $\tilde{J}_t^{\hat{\pi}}(b_t)\leq\tilde{J}_t^{\bar{\pi}}(b_t)$ for all $b_t$.
\end{theorem}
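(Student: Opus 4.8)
The plan is to prove $\tilde{J}_t^{\hat{\pi}}(b_t)\leq\tilde{J}_t^{\bar{\pi}}(b_t)$ by backward induction on $t$, mirroring the classical rollout cost-improvement argument but carried out on the averaged-over-$P_t(u^{t-1})$ belief-state cost-to-go functions \eqref{Eq:Ave-CTG-rollout}--\eqref{Eq:Ave-CTG-base}. First I would expose the Bellman recursion hidden in the Q-factor functional \eqref{Eq:Q-Factor-base-policy}: expanding the continuation term and averaging over $P_t(u^{t-1})$, both functions decompose as an expected stage cost plus an expected continuation,
\begin{align*}
    \tilde{J}_t^{\hat{\pi}}(b_t) &= \sum_{u^{t-1}}\Big(g_t(b_t,\tilde{\mu}_t)+\mathbb{E}\big[\tilde{J}_{t+1}^{\hat{\pi}}(b_{t+1})\big]\Big)P_t(u^{t-1}),\\
    \tilde{J}_t^{\bar{\pi}}(b_t) &= \sum_{u^{t-1}}\Big(g_t(b_t,\mu_t)+\mathbb{E}\big[\tilde{J}_{t+1}^{\bar{\pi}}(b_{t+1})\big]\Big)P_t(u^{t-1}),
\end{align*}
where $g_t$ denotes the expected stage cost and the inner expectation $\mathbb{E}[\cdot]$ is taken over $(x_t,u_t)$ under the nonnegative weights $w_t\mu_tb_t$ of \eqref{Eq:Q-Factor-base-policy}, while $b_{t+1}$ evolves by \eqref{Eq:Info-State-Update}. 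By the one-step lookahead construction \eqref{Eq:Approximate-Q-Factor}, the base policy continuation is precisely $Q_{N_s}^{\bar{\pi}}$, so that $\sum_{u^{t-1}}\tilde{Q}_t^{\bar{\pi}}(b_t,\mu_t)P_t(u^{t-1})=\tilde{J}_t^{\bar{\pi}}(b_t)$.

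For the base case $t=N$, both policies share the identical terminal continuation $Q_{N_s}^{\bar{\pi}}$, so \eqref{Eq:Ave-CTG-rollout} and \eqref{Eq:Ave-CTG-base} differ only through the applied control; since $\tilde{\mu}_N$ minimizes $\sum_{u^{N-1}}\tilde{Q}_N^{\bar{\pi}}(b_N,\mu_N)P_N(u^{N-1})$ by \eqref{Eq:Online-Min} and the base control $\mu_N$ is feasible, I get $\tilde{J}_N^{\hat{\pi}}(b_N)\leq\tilde{J}_N^{\bar{\pi}}(b_N)$ at once. For the inductive step, assume $\tilde{J}_{t+1}^{\hat{\pi}}(b_{t+1})\leq\tilde{J}_{t+1}^{\bar{\pi}}(b_{t+1})$ for every $b_{t+1}$. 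Because the weights $w_t\mu_tb_tP_t(u^{t-1})$ are nonnegative, substituting this hypothesis inside the expectation in the recursion for $\tilde{J}_t^{\hat{\pi}}$ preserves the inequality, and the resulting quantity is exactly the averaged one-step lookahead Q-factor $\sum_{u^{t-1}}\tilde{Q}_t^{\bar{\pi}}(b_t,\tilde{\mu}_t)P_t(u^{t-1})$ evaluated at the rollout control. The defining minimization \eqref{Eq:Online-Min} of $\tilde{\mu}_t$, together with feasibility of $\mu_t$, bounds this above by $\sum_{u^{t-1}}\tilde{Q}_t^{\bar{\pi}}(b_t,\mu_t)P_t(u^{t-1})=\tilde{J}_t^{\bar{\pi}}(b_t)$, closing the induction for all $t\in\mathbb{N}_1^N$; the inner lookahead minimization required at each stage converges by Lemma \ref{Lemma:Convergence}.

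The step I expect to be the main obstacle is justifying the Bellman decomposition rigorously, specifically that averaging over $P_t(u^{t-1})$ commutes with the belief transition so that the inductive hypothesis at $b_{t+1}$ may be applied under the correct measure $P_{t+1}(u^t)$. This rests on the Markov property of the belief update \eqref{Eq:Info-State-Update} conditioned on $(u^{t-1},b_t)$ and on the tower relation tying $P_t(u^{t-1})$ to $P_{t+1}(u^t)$ through the output distribution $\nu_t$ of \eqref{Eq:Output-Distribution}; I would need to confirm that, after summing the inner $u_t$ against $\nu_t$, the continuation $\tilde{J}_{t+1}$ is averaged against $P_{t+1}(u^t)$, keeping the recursion self-consistent across stages.
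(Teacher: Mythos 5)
Your proposal is correct and follows essentially the same route as the paper's own proof: backward induction in time, where the inductive hypothesis is propagated through the nonnegative weights $w_t\mu_t b_t P_t(u^{t-1})$ (the paper's step $(b)$), the defining minimization \eqref{Eq:Online-Min} identifies the rollout value with the one-step lookahead minimum (step $(c)$), and feasibility of the base control closes the bound (step $(d)$). The minor differences --- your base case at $t=N$ via the minimization rather than the paper's asserted equality, and your explicit flagging of the measure-consistency between $P_t(u^{t-1})$, $\nu_t$, and $P_{t+1}(u^t)$ --- are refinements of points the paper leaves implicit, not a different argument.
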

\begin{proof}
We prove this by induction backward in time. Clearly it holds for $t=N$ since $\tilde{J}_N^{\hat{\pi}}=\tilde{J}_N^{\bar{\pi}}$ according to \eqref{Eq:DP-Qfactor}. Assuming it holds for time stage $t+1$, we have for all $b_t$
\begin{align*}
    &\tilde{J}_t^{\hat{\pi}}(b_t)\\
    &\overset{(a)}{=} \sum_{u^{t-1}\in\mathcal{U}^{t-1}} \tilde{Q}_t^{\hat{\pi}}[g_t(b_t,\tilde{\mu}_t),Q_{N_s}^{{\bar{\pi}}}(\tilde{b}_{t+1})](b_t,\tilde{\mu}_t)P_t(u^{t-1})\\
    &\overset{(b)}{\leq} \sum_{u^{t-1}\in\mathcal{U}^{t-1}} \tilde{Q}_t^{\bar{\pi}}[g_t(b_t,\tilde{\mu}_t),Q_{N_s}^{\bar{\pi}}(\tilde{b}_{t+1})](b_t,\tilde{\mu}_t)P_t(u^{t-1})\\
    &\overset{(c)}{=} \min_{\mu_t\in\bar{\pi}}\\&\quad \sum_{u^{t-1}\in\mathcal{U}^{t-1}} \tilde{Q}_t^{\bar{\pi}}[g_t(b_t,{\mu}_t),Q_{N_s}^{\bar{\pi}}({b}_{t+1})](b_t,{\mu}_t)P_t(u^{t-1})%\\
\end{align*}
\begin{align*}
    &\overset{(d)}{\leq} \sum_{u^{t-1}\in\mathcal{U}^{t-1}} \tilde{Q}_t^{\bar{\pi}}[g_t(b_t,{\mu}_t),Q_{N_s}^{\bar{\pi}}({b}_{t+1})](b_t,{\mu}_t)P_t(u^{t-1})\\
    &= \tilde{J}_t^{\bar{\pi}}(b_t),
\end{align*}
where $\overset{(a)}{=}$ is the DP equation for rollout policy \eqref{Eq:Ave-CTG-rollout}, $\overset{(b)}{\leq}$ holds because of the inductive assumption, $\overset{(c)}{=}$ holds by the definition of the online rollout minimization according to \eqref{Eq:Online-Min}, and $\overset{(d)}{\leq}$ holds by the DP equation for the base policy \eqref{Eq:Ave-CTG-base}. This completes the proof.
\end{proof}

Theorem \ref{theorem:rollout-cost-improvement} shows the cost improvement property and guarantees that the rollout policy does not degrade the performance compared with the base policy. Furthermore, repeated rollout can be performed by constructing an improved discretized information-state space $\tilde{\mathcal{B}}_t$ based on empirical observations from the previous rollout trajectory.
The updated base policy $\tilde{\pi}$ is re-approximated via Algorithm \ref{Algo:Offline} and subsequently enhanced via Algorithm \ref{Algo:Online}. Hence, repeated rollout can progressively refine the approximate optimal solution to problem (\ref{Problem:Info-Theo-MDP}).

\begin{algorithm}[t]
\caption{Online Rollout Evaluation}
\label{Algo:Online}
\begin{algorithmic}[1]
    \REQUIRE {$\{\bar{\mathcal{B}}_t:t\in\mathbb{N}_{N-N_s+1}^N\}$ of given $\{{b}_t:t\in\mathbb{N}_{N-N_s+1}^N\}$,\\
    $\{\mu_t^*(b_t):t\in\mathbb{N}_{N-N_s+1}^N, b_t\in\bar{\mathcal{B}}_t\}$, \\$\{\nu_t^*[b_t]:t\in\mathbb{N}_{N-N_s+1}^N,b_t\in\bar{\mathcal{B}}_t\}$,\\ $\{Q_{N_s}^{\bar{\pi}}:b_t\in\bar{\mathcal{B}}_t\}$.}
    \STATE \textbf{Initialize} $\mu_0=P_0(u_0|x_0)$, $P_1(u^0)$, $\tilde{b}_1=P(x_0|u_0)$ \;\\
    \FOR {$t = 1:N$}
    \STATE $\tilde{Q}_t^{\bar{\pi}}(\tilde{b}_t,\mu_t) \leftarrow$~step 3-9 in Algorithm \ref{Algo:Offline} \;
    \STATE $\tilde{\mu}_t \leftarrow$~\eqref{Eq:Online-Min} \;
    \STATE $\tilde{b}_{t+1}\leftarrow$~\eqref{Eq:Info-State-Update}\;
    \ENDFOR
    \ENSURE {$\hat{\pi}=\{\mu_0,\tilde{\mu}_1,\ldots,\tilde{\mu}_N\}$, $\{\tilde{b}_t,t\in\mathbb{N}_1^N\}$,\\ 
    $\{\tilde{\nu}_t:t\in\mathbb{N}_0^N\}$, $C^{\tilde{\pi}}(X^N,U^N)$.}
\end{algorithmic}
\end{algorithm}

\section{Numerical Example}
\label{numerical-results}
This section provides numerical simulations to support our theoretical findings that led to Algorithms \ref{Algo:Offline}, \ref{Algo:Online}. We consider binary alphabets, i.e., $\{\mathcal{X}_t=\mathcal{U}_t=\{0,1\}:t\in\mathbb{N}_0^N\}$, with Hamming distance function given by
\begin{equation}
    \rho_t(x_t,u_t)\equiv\rho(x_t,u_t) = \bigg\{
    \begin{matrix}
        0, \ \ \text{if}\ x_t= u_t \\
        1, \ \ \text{if}\ x_t\neq u_t
    \end{matrix}, ~~\forall{t}.
\end{equation}
We consider finite previous control memory (memory-1 $u^{t-1}\approx u_{t-1}$) applied for information state $b_t=P_t(x_{t-1}|u_{t-1})$, control policy $\mu_t=P_t(u_t|u_{t-1},x_t)$, and output distribution $\nu_t=P_t(u_t|u_{t-1})$. Under this assumption, we consider the base information-state space $\bar{\mathcal{B}}_t$, which consists of a matrix comprising two ``quantized'' binary probability distributions drawn from the original continuous state space. We denote with $n_t$ each quantization level per $t$, which results in an information-state space $\bar{\mathcal{B}}_t$ with size $|\bar{\mathcal{B}}_t| = n_t^{|\mathcal{U}_{t-1}|} = n_t^2$, representing combinations of $2$ out of $n_t$ quantized binary distributions.

\begin{xmpl}
\label{example}
(Time-invariant binary symmetric controlled Markov chain)~
The source distribution $w_t$ at each $t\in\mathbb{N}_1^N$ is chosen such that for each $t$, we have
\begin{equation}
\begin{split}
    w_t = 
    \begin{pmatrix}
        1-\alpha_0 & \alpha_0 & 1-\alpha_1 & \alpha_1\\
        \alpha_0 & 1-\alpha_0 & \alpha_1 & 1-\alpha_1
    \end{pmatrix},
\end{split}
\end{equation}
where the first two columns are under the condition of $u_{t-1}=0$ and the rest for $u_{t-1}=1$, where $\alpha_0,\alpha_1\in(0,1)$. Moreover, we choose the quantization levels $\{n_t=n:t\in\mathbb{N}_1^N\}$ and the stage-wise Lagrangian $\{s_t=s:t\in\mathbb{N}_0^N\}$. 
We demonstrate some results applying the proposed rollout with Algorithms \ref{Algo:Offline}, \ref{Algo:Online} by multi-core processing in Fig. \ref{Fig:info.cost} for $n=20$, $\alpha_0=0.4$, $\alpha_1=0.8$, $s=-2$, $N=100$, $N_S=5$.

\begin{figure}
\centering
\begin{subfigure}[b]{0.49\linewidth}
    \centering
    \includegraphics[height=3.4cm,width=\linewidth]{./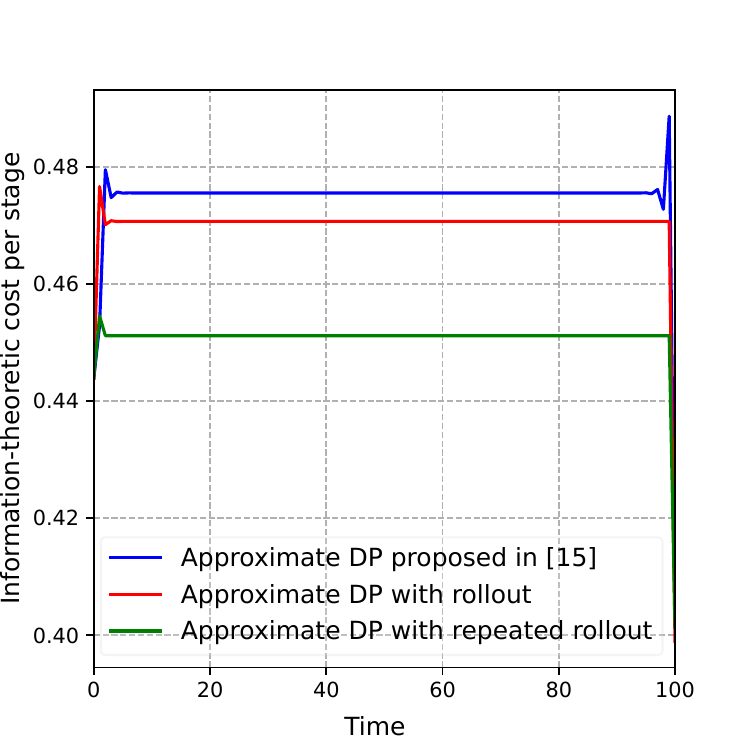}
    \caption{$I(X_t;U_{t}|U_{t-1})$}
    % \caption{Stage-wise information-theoretic cost}
    \label{Fig:info.cost}
\end{subfigure}
\begin{subfigure}[b]{0.49\linewidth}
    \centering
    \includegraphics[height=3.4cm,width=\linewidth]{./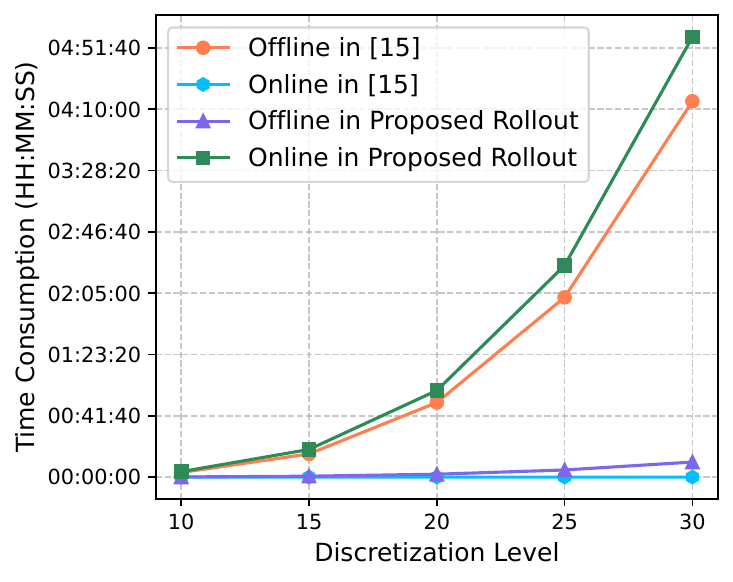}
    % \captionof{figure}{Time Consumption}
    \caption{Time Consumption}
    \label{Fig:Time-Consumption}
\end{subfigure}
\caption{Illustration of the stage-wise information-theoretic cost and time consumption comparison}
\label{Fig:numerical-invariant-rollout}
\end{figure}

We then illustrate the computational complexity of our proposed scheme compared with the policy approximation approach in \cite{he:2024}. 
As shown in Table \ref{Tab:computation-complexity}, the proposed rollout-based method significantly reduces offline computation time by operating over a much shorter rolling horizon ($N_s\ll N$).  
Although the online rollout lookahead minimization of our proposed approach is a bit time-consuming, it offers a clear cost improvement over the prior approach in \cite{he:2024} under the same information-state space discretization level (See Fig. \ref{Fig:info.cost}, \ref{Fig:Time-Consumption}). 
We also demonstrate the cost performance of repeated rollout in Fig. \ref{Fig:info.cost} where the updated information-state space $\tilde{\mathcal{B}}_t$ with $|\tilde{\mathcal{B}}_t|=n_t^2$, is constructed based on the empirical value range. Using this refined $\tilde{\mathcal{B}}_t$ for a new round of offline base policy approximation, we observe a further cost improvement after the online rollout evaluation.

\begin{table}[t]
    \centering
    \begin{tabular}{ccc}\hline
     & Offline Phase & Online Phase \\\hline
    Proposed scheme & $\mathcal{O}(N_s n^2 \frac{1}{\epsilon})$ & $\mathcal{O}(N n \frac{1}{\epsilon})$ \\
    \cite{he:2024} & $\mathcal{O}(N n^2  \frac{1}{\epsilon})$ & $\mathcal{O}(Nn)$ \\\hline
    \end{tabular}
    \caption{Computation Complexity}
\label{Tab:computation-complexity}
\end{table}

We also observe that, with the comparable computational time, the proposed method requires indeed less random access memory (RAM) to realize approximation compared to the prior approach in \cite{he:2024}. This memory efficiency enables the offline base policy approximation to explore a larger discretized information-state space, thereby leading to improved online performance. Moreover, this efficiency enhances the proposed method's scalability, making it well-suited for extension to more complex settings as additional computational resources become available.
\end{xmpl}

\section{Conclusion}
\label{conclusion}
We proposed a DI-MDP problem over a finite horizon, aiming to derive the optimal control policy by minimizing the DI subject to stage-wise state-dependent cost constraints. We first reformulated the problem as an unconstrained information-state MDP and derived DP recursion in Q-factor form. To address the intractability caused by the continuous information-state space in the offline training phase, we proposed a new rollout-based ADP framework that involves offline base policy approximation followed by online approximate policy improvement. Our theoretical results are supplemented with simulation studies that showed the cost performance improvement and demonstrated the scalability.

\bibliographystyle{IEEEtran}
\bibliography{string,literature_conf}

                                                                                                                        %\addtolength{\textheight}{-12cm}   % This command serves to balance the column lengths
% on the last page of the document manually. It shortens
% the textheight of the last page by a suitable amount.
% This command does not take effect until the next page
% so it should come on the page before the last. Make
% sure that you do not shorten the textheight too much.

%%%%%%%%%%%%%%%%%%%%%%%%%%%%%%%%%%%%%%%%%%%%%%%%%%%%%%%%%%%%%%%%%%%%%%%%%%%%%%%%

%%%%%%%%%%%%%%%%%%%%%%%%%%%%%%%%%%%%%%%%%%%%%%%%%%%%%%%%%%%%%%%%%%%%%%%%%%%%%%%%

%%%%%%%%%%%%%%%%%%%%%%%%%%%%%%%%%%%%%%%%%%%%%%%%%%%%%%%%%%%%%%%%%%%%%%%%%%%%%%%%

\end{document}